\newenvironment{proof}[1][Proof]{\begin{trivlist}
\item[\hskip \labelsep {\bfseries #1}]}{\end{trivlist}}
\definecolor{lgray}{gray}{0.92}
\definecolor{lblue}{rgb}{0.90,0.90,1.00}
\definecolor{lyellow}{rgb}{1.00,1.00,0.70}
\newenvironment{codex}{\small\verbatim}{\endverbatim\normalsize}
\newtheorem{prop}{Proposition}
\newtheorem{df}{Definition}
\newcommand{\BI}[0]{\begin{itemize}}
\newcommand{\EI}[0]{\end{itemize}}
\newcommand{\I}[0]{\item}
\newcommand{\BE}[0]{\begin{enumerate}}
\newcommand{\EE}[0]{\end{enumerate}}
\newcommand{\BX}[0]{\begin{codex}}
\newcommand{\EX}[0]{\end{codex}}
\def \bscale1 {0.25}
\def \bscale {0.25}
\def \N {\mathbb{N}}
\newcommand{\FIG}[4]{
\begin{figure}[htbp]
\centering
{\includegraphics[scale=#3]{figs/#4}}
\caption{#2}
\label{#1}
\end{figure}
}
\begin{document}
\conferenceinfo{} {}
\CopyrightYear{}
\copyrightdata{}

\titlebanner{}      
\preprintfooter{}   

\title{ 
  Binary Tree Arithmetic with Generalized Constructors
}
\subtitle{}
\authorinfo{~}
   {~}
   {~}
         

\maketitle

\begin{abstract}
We describe arithmetic computations in terms of operations on 
some well known free algebras (S1S, S2S
and ordered rooted binary trees) while emphasizing
the common structure present in all them
when seen as isomorphic with the set of
natural numbers.

Constructors and deconstructors seen through
an initial algebra semantics
are generalized to recursively defined
functions obeying similar laws.
Implementations using Scala's {\tt apply} and {\tt unapply}
are discussed together with
an application to a realistic arbitrary size
arithmetic package written in Scala, based on the
free algebra of rooted ordered binary trees,
which also supports rational number operations
through an extension to signed rationals
of the Calkin-Wilf bijection.

\category{D.3.3}
{PROGRAMMING LANGUAGES}{Language Constructs and Features}
[Data types and structures]

\terms
Algorithms, Languages, Theory

\keywords
arithmetic computations with free algebras,
generalized constructors,
declarative modeling of computational phenomena,
bijective G\"odel numberings and algebraic datatypes,
Calkin-Wilf bijection between natural and rational numbers
\end{abstract}

\section{Introduction}

Classical mathematics frequently uses functions defined on equivalence classes
(e.g. modular arithmetic, factor objects in algebraic structures) 
provided that it can prove that the choice of a representative
in the class is irrelevant. 

On the other hand, when working with proof assistants 
like  Coq \cite{Coq:manual}, based on type theory
and its computationally refined extensions like the
Calculus of Construction \cite{coquand:calculus:ic:88},
one cannot avoid noticing 
the prevalence of data types corresponding to
free objects, on top of which everything else is built
in the form of canonical representations.

Category-theory based descriptions of Peano arithmetic
fit naturally in the general view that data
types are {\em initial algebras} - in this case
the initial algebra generated by the successor function,
as a provider
of the canonical representation of natural
numbers.
Of course, a critical element in choosing such
free algebras is computational efficiency of 
the operations one wants to perform on them, in
terms of low time and space complexity.
For instance Coq formalizations of natural numbers
typically use binary representations while keeping
the Peano arithmetic view when more convenient 
in proofs \cite{Coq:manual}. 
Note also that free algebras corresponding
to one and two successor arithmetic (S1S and S2S)
have been used as a basis for decidable
weak arithmetic systems like \cite{buchi62} 
and \cite{Rabin69Decidability}.
It has been shown recently in
\cite{padl12,sac12} that the initial algebra of
ordered rooted binary trees 
corresponding to the language
of G\"odel's System {\bf T} types \cite{goedel1958bisher}
can be used as a the language of
arithmetic representations, with hyper-exponential
gains when handling numbers built
from ``towers of exponents'' like $2^{2^{... ^2}}$.
Independently, this view is confirmed by the
suggestion to use $\lambda$-terms as a form
of universal data compression tool \cite{complambda}
and by deriving bijective encodings of data types using a
game-based mechanism \cite{everybit}.

These results suggest a {\em free algebra based 
reconstruction of fundamental data types}
that are relevant as building blocks
for finite mathematics and computer science.
We will sketch in this paper an (elementary, not involving
category theory) foundation for
arithmetic computations with free algebras,
in which construction of sets, sequences,
graphs, etc. can be further carried out along the
lines of \cite{everything,ppdp10tarau,sac12}.

The paper is organized as follows.
We define in section \ref{free} isomorphisms between 
the free algebras of
signatures consisting of one constant 
and respectively,
of one successor ({\tt S}), 
two successors ({\tt O} and {\tt I}) 
and a free magma constructor ({\tt C}). 

To enable computations with the
objects of the free algebras,
we discuss the use of generalized
constructors / destructors derived from these
free algebras using the {\em apply / unapply}
constructs available in Scala (section \ref{gcons}).
As an application, a complete arbitrary size
rational arithmetic package using the
Calkin-Wilf bijection between positive
rationals and natural numbers is described in section \ref{rats}.
Sections \ref{related} and \ref{concl}
discuss related work and our conclusions.

\section{Free Algebras and Data Types} \label{free}

\begin{df}
Let  $\sigma$ be a signature consisting of an alphabet of constants (called generators) 
and an alphabet of function symbols (called {\em constructors}) with various arities. 
We define the free algebra $A_\sigma$ of signature $\sigma$ 
inductively as the smallest set such that: 
\BE
\I if $c$ is a constant of $\sigma$ then $c \in A_\sigma$
\I if $f$ is an n-argument function symbol of $\sigma$, then 
$\forall i, 0 \le i<n, t_i \in A_\sigma \Rightarrow f(t_0,\ldots,t_i,\ldots,t_{n-1}) \in  A_\sigma$. 
\EE
\end{df}
We will write {\tt c/0} for constants and {\tt f/n} for 
function symbols with {\tt n} arguments belonging to a given a signature.

More general definitions, e.g. as {\em initial objects}
in the category of algebraic structures,
are also used in the literature and a close relation exists with term algebras
distinguishing between function constructors 
(generating the {\em Herbrand Universe}) and predicate constructors
(generating the {\em Herbrand Base}).

Recursive data types in programming languages like Haskell, ML, Scala can be seen as a
notation for free algebras. We refer to \cite{wadler90unpub} 
for a clear and convincing description of this connection.

For instance, the Haskell declarations
\begin{codex}
data AlgU = U | S AlgU
data AlgB = B | O AlgB | I AlgB
data AlgT = T | C AlgT AlgT
\end{codex}
correspond, respectively to 
\begin{itemize}
\item the free algebra {\tt AlgU} with a single 
generator {\tt U} and unary constructor {S} 
(that can be seen as part of the language of Peano 
or Robinson arithmetic, or the decidable (W)S1S 
  system, \cite{buchi62})
\item the free algebra {\tt AlgB} with single generator {\tt B} and two unary 
  constructors {\tt O} and {\tt I} (corresponding to the language 
of the decidable system (W)S2S \cite{Rabin69Decidability}), 
as well as ``bijective base-2'' number notation \cite{wiki:bijbase}
\item the free algebra {\tt AlgT} with single 
generator {\tt T} and one binary constructor 
{\tt C} (essentially the same thing as the {\em free magma}
 generated by ${T}$).
\end{itemize}
The set-theoretical construction corresponding to the ``{\tt |}'' 
operation is {\em disjoint union} and 
the data types correspond to infinite sets generated by 
applying the respective constructors repeatedly.
The set-theoretical interpretation of ``self-reference'' in
such data type definitions can be seen as fixpoint operation on sets
of natural numbers as shown in the $P\omega$ construction
used by Dana Scott in defining the denotational semantics
for various $\lambda$-calculus constructs \cite{scott76}.

We will next ``instantiate'' some general results to make the underlying mathematics
as {\em elementary} and self-contained as possible. While category theory 
is frequently used as the mathematical backing for
data-types, we will provide here a simple set theory-based 
formalism, along the lines of \cite{bourbakiAlgI}.

We will start with the elementary mathematics behind the {\tt AlgT} data
type and follow with an outline for a similar 
treatment of {\tt AlgU} and {\tt AlgB}.

\subsection{The free magma of ordered rooted binary trees with empty leaves}

\begin{df}
A set $M$ with a (total) binary operation $\ast$ is called a {\em magma}.
\end{df}
\begin{df}
A morphism between two magmas $M$ and $M'$ is a function $f:M \to M'$ such that
$f(x \ast y)=f(x)\ast f(y)$.
\end{df}

Let $X$ be a set.
We define the sets $M_n(X)$ inductively 
as follows: $M_1(X)=X$ and for $n>1$, $M_n(X)$ 
is the disjoint union of the sets $M_k(X) \times M_{n-k}(X)$ for $0<k<n$.
Let $M(X)$ be the disjoint union of the family of sets $M_n(X)$ for $n>0$.
We identify each set $M_n$ with its canonical image in $M(X)$. 
Then for $w \in M_n(X)$,
we call $n$ the {\em length} of $w$ and denote it $l(w)$.
Let $w,w' \in M(X)$ and let $p=l(w)$ and $q=l(w')$. 
The image of $(w,w') \in M_p \times M_q$
under the canonical injection in $M(X)$ is called 
the composition of $w$ and $w'$ and
is denoted $w \ast w'$.

When $X=\{T\}$ where $T$ is interpreted as the ``empty'' leaf of 
ordered rooted binary trees,
the elements of $M_n$ can be seen as ordered rooted binary trees with
$n$ leaves while the composition operation ``$\ast$'' represents joining 
two trees at their roots to form a new tree.

\begin{df}
The set $M(X)$ with the composition operation $(w,w') \to w \ast w'$ is called
the {\em free magma} generated by $X$.
\end{df}

\begin{prop} \label{unique}
Let $M$ be a magma. Then every mapping $u:X \to M$ can be extended in a unique way to
a morphism of $M(X)$ into $M$.
\end{prop}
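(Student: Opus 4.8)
The plan is to prove both existence and uniqueness by strong induction on the length $l(w)$, with the disjoint-union construction of $M(X)$ doing the essential work. The key preliminary fact I would isolate first is a \emph{unique factorization} property: since for $n>1$ the set $M_n(X)$ is the disjoint union of the products $M_k(X) \times M_{n-k}(X)$ over $0<k<n$, every $w$ with $l(w)=n>1$ lies in exactly one summand, and hence admits a unique decomposition $w = w_1 \ast w_2$ with $l(w_1)=k$ and $l(w_2)=n-k$ for a uniquely determined split $k$. This is the real engine of the argument, so I would establish it cleanly before anything else.

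For existence, I would define $\bar u : M(X) \to M$ recursively on length: set $\bar u(w) = u(w)$ for $w \in M_1(X)=X$, and $\bar u(w) = \bar u(w_1) \ast \bar u(w_2)$ whenever $l(w)>1$, using the unique factorization $w = w_1 \ast w_2$. This recursion is well-defined precisely because that factorization is unique. To check that $\bar u$ is a magma morphism, I would take arbitrary $w,w'$ with $l(w)=p$ and $l(w')=q$; by definition $w \ast w'$ is the image of the pair $(w,w')$ under the canonical injection of the $M_p(X) \times M_q(X)$ summand into $M_{p+q}(X)$, so its unique factorization is exactly $(w,w')$, and therefore $\bar u(w \ast w') = \bar u(w) \ast \bar u(w')$ directly from the defining recursion. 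That $\bar u$ extends $u$ holds by construction on $M_1(X)$.

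For uniqueness, I would suppose $v$ is any morphism extending $u$ and prove $v = \bar u$ by induction on length. On $M_1(X)$ both maps agree with $u$. For $w = w_1 \ast w_2$ of length greater than $1$, the morphism property gives $v(w) = v(w_1) \ast v(w_2)$, and by the induction hypothesis applied to the strictly shorter arguments $w_1$ and $w_2$ this equals $\bar u(w_1) \ast \bar u(w_2) = \bar u(w)$, closing the induction.

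The only genuinely delicate point is the unique-factorization lemma; once it is in hand, the rest is a routine structural induction. The subtlety there is purely the bookkeeping of the disjoint-union indexing, namely verifying that the canonical injections identify each pair $(w,w')$ with a single element $w \ast w'$ of $M(X)$ from which the pair can be recovered unambiguously, rather than anything conceptually hard.
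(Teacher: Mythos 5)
Your proposal is correct and follows essentially the same route as the paper's own proof: a recursive definition of the extension on the sets $M_n(X)$, exploiting the disjoint-union decomposition of $M_n(X)$ into products $M_k(X)\times M_{n-k}(X)$, followed by an induction for uniqueness. You are in fact more careful than the paper, which silently relies on the unique-factorization property you isolate (and omits the explicit morphism check and uniqueness induction), so your write-up fills in exactly the details the paper's terse argument takes for granted.
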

\begin{proof}
We define inductively the mappings $f_n:M_m(X) \to M$ as follows:
For $n=1, f_1=f$. 
For $n>1, \forall p \in \{1,..,n-1\}, f_n(w \ast w') = f_p(w) \ast f_{n-p}(w')$.
Let $g:M(X) \to M$ such that $\forall n>0,\forall x \in M_n(X), g(x)=f_n(x)$.
Then $g$ is the unique morphism of $M(X)$ into M which extends f.
\end{proof}

Note that this property corresponds of the {\em initial algebra} 
\cite{wadler90unpub} view of the
corresponding (ordered, rooted) {\em binary tree} data type.

\begin{df}
If $u:X \to Y$, we denote $M(u):M(X) \to M(Y)$ the unique morphism of 
magmas defined by the construction in {\bf Proposition \ref{unique}}.
\end{df}
If $v:Y \to Z$ then the morphism $M(v) \circ M(u)$ extends $v \circ u : X \to Z$ and 
therefore $M(v) \circ M(u) = M(v \circ u)$.

\begin{prop} \label{magmabij}
If $u:X \to Y$ is respectively injective, surjective, bijective then so is $M(u)$.
\end{prop}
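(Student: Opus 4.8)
The plan is to argue by induction on the length $l(w)$, exploiting two structural facts about the free magma: that $M(u)$ is length-preserving, and that every element of $M_n(X)$ with $n>1$ decomposes in exactly one way as a composite $w \ast w'$. First I would record that $M(u)$ maps $M_n(X)$ into $M_n(Y)$ for every $n$. This is immediate from the recursive definition in \textbf{Proposition \ref{unique}}: on $M_1(X)=X$ the morphism $M(u)$ coincides with $u$, which lands in $Y=M_1(Y)$; and since $l(w \ast w') = l(w)+l(w')$ while $M(u)(w \ast w') = M(u)(w) \ast M(u)(w')$, a short induction shows that $M(u)$ carries a tree of length $n$ to a tree of length $n$.

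The second ingredient, which I regard as the crux, is the \emph{unique readability} of elements of $M_n(X)$. Because $M_n(X)$ is built as a disjoint union of the products $M_k(X) \times M_{n-k}(X)$ over $0<k<n$, each element of $M_n(X)$ with $n>1$ remembers which summand it comes from, and hence determines a unique index $k$ together with a unique pair $(w,w')$, where $w \in M_k(X)$ and $w' \in M_{n-k}(X)$, such that $w \ast w'$ is that element; in particular the split index is forced, namely $k=l(w)$. I would isolate this observation as a lemma, since both the injectivity and the surjectivity arguments hinge on being able to split a composite in one and only one way.

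With these in hand the induction is routine. The base case $n=1$ is exactly the hypothesis on $u$, because the restriction of $M(u)$ to $M_1(X)=X$ is $u$ itself. For the inductive step with $n>1$, assume the restrictions $M(u):M_m(X) \to M_m(Y)$ enjoy the desired property for all $m<n$. For injectivity, given $a,b \in M_n(X)$ with $M(u)(a)=M(u)(b)$, I write $a=w_1 \ast w_1'$ and $b=w_2 \ast w_2'$ by unique readability; applying the morphism property and combining length-preservation with unique readability in $M_n(Y)$ forces the two split indices to agree and then $M(u)(w_1)=M(u)(w_2)$ and $M(u)(w_1')=M(u)(w_2')$, whence the induction hypothesis on the two shorter factors gives $w_1=w_2$, $w_1'=w_2'$, so $a=b$. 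For surjectivity, given $c \in M_n(Y)$ I decompose $c=v \ast v'$, lift $v$ and $v'$ through the surjective restrictions of $M(u)$ at the shorter lengths, and recombine to obtain a preimage $w \ast w'$ of $c$. The bijective case follows by combining the two.

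The main obstacle is really just the careful bookkeeping of the unique-readability lemma: one must be certain that the disjoint-union construction genuinely tags each composite with its split point, so that an equation of the form $M(u)(w_1) \ast M(u)(w_1') = M(u)(w_2)\ast M(u)(w_2')$ can be read off componentwise rather than merely as an identity of products of a priori different shapes. Once length-preservation pins down the shape, everything reduces to the induction hypothesis on the two shorter factors, and no genuinely new difficulty arises in passing from the injective to the surjective to the bijective statement.
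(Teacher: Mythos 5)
Your proof is correct, but it takes a genuinely different route from the paper's. The paper does not induct on tree structure at all: it exploits the functorial identity $M(v) \circ M(u) = M(v \circ u)$ recorded just before the proposition, together with the characterization of injections as maps admitting a left inverse and of surjections as maps admitting a right inverse. Concretely, if $u$ is injective (and $X$ is nonempty) one picks $u':Y \to X$ with $u' \circ u = \mathit{id}$, whence $M(u') \circ M(u) = M(u' \circ u) = \mathit{id}$ and $M(u)$ is injective; the surjective case is dual, and the case $X=\emptyset$ is handled separately since a left inverse need not exist there. That argument is shorter and dispenses with your unique-readability bookkeeping entirely, but it quietly invokes the axiom of choice to produce a global right inverse of a surjection, and it needs the special empty case. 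Your structural induction on $l(w)$, resting on length preservation and on the fact that the disjoint-union construction of $M_n(X)$ tags each composite with its split index, is longer but more elementary and essentially choice-free, since the surjective step only instantiates existentials for the two factors of a single element. One small point to make explicit in your write-up: to get injectivity of $M(u)$ on all of $M(X)$, not merely on each $M_n(X)$, you should note that elements of different lengths cannot collide, because $M(u)$ preserves length and $M(Y)$ is a disjoint union of the $M_n(Y)$; your length-preservation lemma gives this at once, but the sentence deserves to be written.
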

It follows that
\begin{prop}
If $X=\{x\}$ and $Y=\{y\}$ and $u:X \to Y$ is 
the bijection such that $f(x)=y$, then
$M(u):M(X) \to M(Y)$ is a bijective 
morphism (i.e. an {\em isomorphism}) of free magmas.
\end{prop}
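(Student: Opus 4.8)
The plan is to recognize this statement as an immediate specialization of \textbf{Proposition \ref{magmabij}} to the case where both generating sets are singletons. The only genuinely new observation needed is that a map between two one-element sets is automatically a bijection, after which the earlier proposition supplies both halves of the claim (that $M(u)$ preserves the operation and that it is a bijection), and the definition of \emph{isomorphism} simply packages these two facts together.

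First I would verify that $u:\{x\}\to\{y\}$ with $u(x)=y$ is a bijection: it is injective because its domain has a single element, and surjective because its unique value $y$ exhausts the codomain. Second, I would recall that $M(u)$ is, by the construction following \textbf{Proposition \ref{unique}}, the unique morphism of magmas extending $u$; in particular $M(u)$ satisfies $M(u)(w\ast w')=M(u)(w)\ast M(u)(w')$, so it is a magma morphism by definition. Third, applying \textbf{Proposition \ref{magmabij}} to the bijection $u$ yields directly that $M(u)$ is a bijection. A morphism that is simultaneously a bijection is precisely an isomorphism of magmas, which is the desired conclusion.

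There is no real obstacle at this level, since all the substantive work has already been pushed into \textbf{Proposition \ref{magmabij}}; the present statement is essentially a corollary. Were one instead to prove \textbf{Proposition \ref{magmabij}} from scratch, the argument would proceed by induction on the length $l(w)$, exploiting that $M(u)$ sends $M_n(X)$ into $M_n(Y)$ and that each $M_n(X)$ is the disjoint union of the products $M_k(X)\times M_{n-k}(X)$ for $0<k<n$; injectivity and surjectivity would then be inherited from the corresponding properties at strictly smaller lengths, because both disjoint unions and Cartesian products preserve these properties. For the singleton case at hand, however, it suffices to invoke the finished proposition directly.
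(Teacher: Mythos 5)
Your proposal is correct, and its top-level logic (a map between singletons is automatically a bijection, so \textbf{Proposition \ref{magmabij}} applies and yields that the morphism $M(u)$ is bijective, hence an isomorphism) is exactly the derivation the paper signals with its ``It follows that''. The interesting divergence is in what gets proved where: the paper states \textbf{Proposition \ref{magmabij}} without proof and then, under the singleton proposition, actually supplies the proof of the \emph{general} statement, using the functoriality identity $M(v)\circ M(u)=M(v\circ u)$ together with one-sided inverses --- an injective $u$ has a retraction $u'$ with $u'\circ u=\mathit{id}$, so $M(u')\circ M(u)=M(\mathit{id})=\mathit{id}$ and $M(u)$ is injective, and dually for surjectivity (plus an empty-domain edge case that is vacuous here). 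You instead treat \textbf{Proposition \ref{magmabij}} as a black box and, as a fallback, sketch a direct induction on the length $l(w)$ using the decomposition of $M_n(X)$ into products $M_k(X)\times M_{n-k}(X)$. Both routes are sound: the paper's retraction argument is shorter and avoids induction entirely, needing only the composition law for $M(\cdot)$, while your induction is more elementary and self-contained (it does not presuppose the existence of one-sided inverses, which quietly uses choice in general, though that is irrelevant for singletons). For the singleton case your observation is in fact the cleanest of all, since $u^{-1}$ exists outright and $M(u^{-1})$ is a two-sided inverse of $M(u)$ by functoriality alone.
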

\begin{proof}
If $X$ is empty so is $M(X)$, hence $u$ is injective. 
If $u$ is injective, then $\exists u':Y \to X, u' \circ u = \mathit{id}_{M(X)}$ where
$\mathit{id}_{M(X)}$ denotes the identity mapping of $M(X)$. 
Then $M(u') \circ M(u) = M(u' \circ u) = \mathit{id}_{M(X)}$ and hence $M(u)$ is injective.
If $u$ is surjective, then $\exists u':Y \to X, u \circ u' = \mathit{id}_{M(Y)}$.
Then $M(u) \circ M(u') = M(u \circ u') = \mathit{id}_{M(Y)}$ 
and hence $M(u)$ is surjective.
If $u$ is bijective, than it is injective and surjective and so is $M(u)$.
\end{proof}

We will identify the data type {\tt AlgT} 
with the free magma generated by the set {\tt \{T\}}
and denote its binary operation $x \ast y$ as {\tt C x y}. 
It corresponds to the free
algebra (that we will also denote {\tt AlgT}) defined 
by the signature {\tt \{T/0, C/2\}}.

We can now instantiate the results described by the previous 
propositions to {\tt AlgT}:

\begin{prop} \label{magmaiso}
Let $X$ be an algebra defined by a constant $t$ and a binary operation $c$. 
Then there's a unique
morphism $f:~${\tt AlgT}$~\to X$ that verifies
\begin{equation} \label{zerofix}
f(T)=t
\end{equation}
\begin{equation}
f(C(x,y))=c(f(x),f(y))
\end{equation}
Moreover, if $X$ is a free algebra then $f$ is an isomorphism.
\end{prop}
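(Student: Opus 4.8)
The statement concerns an algebra $X$ given by a constant $t$ and binary operation $c$. I want a unique morphism $f:\text{\tt AlgT}\to X$ satisfying $f(T)=t$ and $f(C(x,y))=c(f(x),f(y))$. Since {\tt AlgT} has been identified with the free magma $M(\{T\})$, the natural move is to invoke Proposition \ref{unique}, which says that any map $u:X\to M$ (here the domain set is the single-generator set $\{T\}$, and the target magma is $X$ with operation $c$) extends uniquely to a magma morphism. So first I would set $u(T)=t$, the unique map from $\{T\}$ into the underlying set of $X$; Proposition \ref{unique} then yields a unique morphism $g:M(\{T\})\to X$ with $g|_{\{T\}}=u$.

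The two displayed equations are then immediate. The condition $f(T)=t$ is just the statement that $f$ restricts to $u$ on the generator, and $f(C(x,y))=c(f(x),f(y))$ is precisely the defining property of a magma morphism once we recall that $C(x,y)$ denotes the magma composition $x\ast y$ in {\tt AlgT} and $c$ is the operation in $X$. So I would set $f=g$ and verify these two equations essentially by unwinding the definition of morphism and the identification of {\tt AlgT} with $M(\{T\})$. Uniqueness is also inherited directly from the uniqueness clause of Proposition \ref{unique}: any $f$ satisfying the two equations is a magma morphism restricting to $u$ on the generator, hence equals $g$.

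For the final clause, suppose $X$ is itself a free algebra of the same signature, i.e. a free magma on a single generator $\{t\}$. Then I would argue that $f$ is an isomorphism. The cleanest route is to observe that $u:\{T\}\to\{t\}$ is a bijection of one-element sets, so by Proposition \ref{magmabij} (and its single-generator corollary) the induced morphism $M(u)$ is a bijective morphism, i.e. an isomorphism. Since the $f$ constructed above coincides with this $M(u)$ by the uniqueness just established, $f$ is an isomorphism.

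The main obstacle is not any computation but making the \emph{identifications} rigorous: I must be careful that the abstract map $u:X\to M$ appearing in Proposition \ref{unique} is being applied with the roles swapped (generator set $\{T\}$ as domain, $X$ as target magma), and that the operation $c$ on $X$ genuinely makes $X$ a magma so that ``morphism'' is meaningful. The freeness hypothesis in the last clause must be interpreted as $X\cong M(\{t\})$ for a single generator; if $X$ were free on more generators the conclusion would fail, so I would state explicitly that I am using single-generator freeness to match the signature $\{T/0, C/2\}$. Everything else is a routine transport of the earlier results through the identification of {\tt AlgT} with $M(\{T\})$.
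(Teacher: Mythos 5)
Your proposal is correct and follows essentially the same route as the paper: the paper's own (one-line) proof simply cites Proposition \ref{magmabij} together with the fact that $f(T)=t$ gives a bijection of the singleton generator sets, leaving the existence/uniqueness part implicit. You make explicit what the paper leaves tacit --- namely that existence and uniqueness come from the universal property of Proposition \ref{unique} applied to $u(T)=t$ --- and you correctly flag the one point of care (the freeness hypothesis on $X$ must mean free on a single generator for the matching signature), so your write-up is a more complete version of the same argument rather than a different one.
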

\begin{proof}
It follows from Proposition \ref{magmabij} and equation $f(T)=t$, given
that $f$ is a bijection 
between the singleton sets {\tt\{T\}} and $\{t\}$.
\end{proof}
\subsection{The One Successor and Two Successors Free Algebras}
The {\em one successor} free algebra (also known as unary natural numbers 
or Peano algebra, as well as the language of the monoid $\{0\}^*$ and 
the decidable systems WS1S and S1S) is defined by the 
signature {\tt \{U/0, S/1\}}, where {\tt U} is a 
constant (seen as zero) and {\tt S} is the unary successor function. 
We will denote {\tt AlgU} this algebra and identify it with
its corresponding data type.

We state an analogue of Proposition \ref{magmaiso} 
for the free algebra {\tt AlgU}.

\begin{prop} \label{unaryiso}
Let $X$ be an algebra defined by a constant $u$ and a unary operation $s$. 
Then there's a unique morphism $f:~${\tt AlgU}$~\to X$ that verifies
\begin{equation} \label{zerofix}
f(U)=u
\end{equation}
\begin{equation}
f(S(x))=s(f(x))
\end{equation}
Moreover, if $X$ is a free algebra then $f$ is an isomorphism.
\end{prop}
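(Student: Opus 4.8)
The plan is to follow the same template used for Proposition \ref{magmaiso}, replacing the binary composition by the unary successor and the generating set $\{T\}$ by $\{U\}$. Note first that the two displayed equations are precisely the condition for $f$ to be a morphism of the algebra $(\mathtt{AlgU}, U, S)$ into $(X, u, s)$, so there is nothing to prove beyond producing a map satisfying them and showing it is unique. I would establish existence and uniqueness by structural induction on {\tt AlgU}. Since {\tt AlgU} is freely generated by the constant $U$ and the unary constructor $S$, every element is uniquely of the form $S^n(U)$ for some $n \ge 0$; this unique-representation property is what makes the clauses $f(U)=u$ and $f(S(x))=s(f(x))$ a legitimate recursive \emph{definition} rather than a constraint that might be inconsistent. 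Defining $f$ this way along the number of applications of $S$ gives a well-founded recursion, so $f$ exists. For uniqueness, suppose $f'$ is any morphism satisfying the two equations; an induction on $n$ shows $f'(S^n(U)) = s^n(u) = f(S^n(U))$, the base case being $f'(U)=u$ and the step using $f'(S(x))=s(f'(x))$.

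For the isomorphism claim I would use the symmetric-inverse argument rather than reproving an analogue of Proposition \ref{magmabij}. Because {\tt AlgU} is itself a free algebra with constant $U$ and unary operation $S$, the existence part applied in the opposite direction yields a morphism $g: X \to~$\mathtt{AlgU} with $g(u)=U$ and $g(s(y))=S(g(y))$. The composite $f \circ g: X \to X$ is then a morphism of $X$ into itself with $(f \circ g)(u)=u$ and $(f \circ g)(s(y)) = s((f \circ g)(y))$; since the identity on $X$ satisfies the same two equations, the uniqueness clause of this very proposition forces $f \circ g = \mathit{id}_X$. Symmetrically, $g \circ f$ is a morphism of {\tt AlgU} into itself fixing $U$ and commuting with $S$, so $g \circ f = \mathit{id}_{\mathtt{AlgU}}$. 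Hence $f$ is a bijection and therefore an isomorphism.

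The main obstacle, as in the magma case, is making precise where freeness of $X$ is actually used: the existence and uniqueness of $f$ require only that $X$ be an algebra with a constant and a unary operation, whereas to obtain the backward morphism $g$ I need $X$ to be free, so that its own universal property is available in the reverse direction. I would therefore be careful to invoke freeness of $X$ exactly once, at the point of producing $g$, and to observe that both appeals to uniqueness are instances of the already-established uniqueness clause, so no new inductive argument is needed to close the isomorphism part.
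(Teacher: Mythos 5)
Your proof is correct, but for the isomorphism claim it takes a genuinely different route from the one the paper intends. The paper gives no standalone proof of Proposition \ref{unaryiso}: it is stated as an analogue of Proposition \ref{magmaiso}, whose proof leans on the functorial machinery developed for the magma case --- Proposition \ref{unique} for existence and uniqueness of the extension, and Proposition \ref{magmabij} (a bijection of generating sets induces a bijective morphism, obtained by lifting a one-sided inverse of the generator map through the functor $M$ and using $M(v)\circ M(u)=M(v\circ u)$). Your existence/uniqueness argument is essentially the paper's Proposition \ref{unique} specialized to the unary signature, where every term is uniquely $S^n(U)$, so that part matches. For bijectivity, however, you bypass the unary analogue of Proposition \ref{magmabij} entirely and instead run the standard ``initial objects are uniquely isomorphic'' argument: freeness of $X$ yields a morphism $g$ in the reverse direction, and uniqueness of endomorphisms fixing the generator forces $f\circ g$ and $g\circ f$ to be identities. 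This is cleaner and avoids re-deriving the functor $M$ and its composition law for the unary signature; the paper's route is more uniform, since the same generator-level bijectivity lemma is meant to be reused verbatim for {\tt AlgU}, {\tt AlgB} and {\tt AlgT}. One small imprecision worth fixing: the conclusion $f\circ g=\mathit{id}_X$ appeals to uniqueness of morphisms out of $X$, which is the universal property of $X$ as a free algebra, not literally ``the uniqueness clause of this very proposition'' (that clause governs morphisms out of {\tt AlgU}); you clearly see that freeness of $X$ is used to produce $g$, but it is used a second time, implicitly, to identify $f\circ g$ with the identity.
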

Note that following the usual identification of data types and initial algebras,
{\tt AlgU} corresponds to the initial algebra ``{\bf 1 + \_}'' through the
operation $g=<${\tt U}$,${\tt S}$>$ seen as 
a bijection $g: 1+\mathbb{N} \to \mathbb{N}$.

The {\em two successor} free algebra (also known as 
bijective base-2 natural numbers or Peano algebra, 
as well as the language of the monoid $\{0,1\}^*$ and 
the decidable systems WS2S and S2S ) is defined by the 
signature {\tt \{B/0, O/1, I/1\} } where {\tt B} is a 
constant (seen as the empty sequence) and {\tt O, I} are two unary 
successor functions. We will denote {\tt AlgB} this 
algebra and identify it with
its corresponding  data type.

We can state an analogue of Proposition \ref{magmaiso} 
for the free algebra {\tt AlgB}.

\begin{prop} \label{binaryiso}
Let $X$ be an algebra defined by a constant $b$ 
and a two unary operations $o,i$. 
Then there's a unique morphism $f:~${\tt AlgB}$~\to X$ that verifies
\begin{equation} \label{zerofix}
f(B)=b
\end{equation}
\begin{equation}
f(O(x))=o(f(x))
\end{equation}
\begin{equation}
f(I(x))=i(f(x))
\end{equation}
Moreover, if $X$ is a free algebra then $f$ is an isomorphism.
\end{prop}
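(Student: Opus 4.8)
The plan is to mirror the constructions behind Propositions \ref{unique} and \ref{unaryiso}, adapting them from one unary successor to two. First I would establish existence of $f$ by structural recursion on the free algebra {\tt AlgB}. Because {\tt AlgB} is free on the signature {\tt \{B/0, O/1, I/1\}}, every element is either the generator $B$ or has exactly one of the forms $O(x)$ or $I(x)$ for a uniquely determined argument $x$ of strictly smaller size; this \emph{unique readability} guarantees that the three defining equations $f(B)=b$, $f(O(x))=o(f(x))$ and $f(I(x))=i(f(x))$ assign one and only one value to each element. They therefore define a total function $f:{\tt AlgB}\to X$, which is a morphism of the corresponding structures by construction. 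This is the two-operation analogue of the length-indexed family $f_n$ built in Proposition \ref{unique}.

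For uniqueness, suppose $g:{\tt AlgB}\to X$ also satisfies the three equations. I would argue by induction on the structure of the argument: $g(B)=b=f(B)$, and if $g(x)=f(x)$ then $g(O(x))=o(g(x))=o(f(x))=f(O(x))$ and likewise $g(I(x))=f(I(x))$, so $g=f$. For the final claim, assume $X$ is itself free on a signature with one generator and two unary operations. Then $X$ carries the same universal property, so applying it in the reverse direction yields a morphism $h:X\to{\tt AlgB}$ sending the constant of $X$ to $B$ and intertwining $o,i$ with $O,I$. The composite $h\circ f$ is then a morphism from {\tt AlgB} to {\tt AlgB} satisfying the defining equations with $b=B$, $o=O$, $i=I$; but the identity map satisfies the same equations, so uniqueness forces $h\circ f=\mathit{id}_{\mathtt{AlgB}}$. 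By the symmetric argument $f\circ h=\mathit{id}_X$, so $f$ is a bijection and hence an isomorphism.

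The part requiring the most care is the isomorphism step, since it quietly reuses the proposition's own existence-and-uniqueness guarantee, applied to the algebra $X$ as source. I must be explicit that \emph{freeness of $X$} means precisely that $X$ enjoys this universal mapping property, so that the back-and-forth morphisms exist and the composites can be identified with identities; otherwise the argument would be circular. A secondary point worth stating carefully is the unique-readability fact underpinning the recursion: it is exactly where freeness is used (an arbitrary algebra of the same signature might satisfy extra relations collapsing distinct terms), and it is what lets the structural induction go through cleanly for both $O$ and $I$.
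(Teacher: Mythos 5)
Your proof is correct. It is worth noting that the paper itself gives no explicit proof of Proposition~\ref{binaryiso}: it is merely ``stated'' as an analogue of Proposition~\ref{magmaiso}, whose proof in turn reduces to Proposition~\ref{magmabij}, i.e.\ to the functorial machinery $M(u)$ built for the free magma (lifting a one-sided inverse $u'$ of the generator map through $M(v)\circ M(u)=M(v\circ u)$ to get injectivity and surjectivity of $M(u)$). Reproducing that route for {\tt AlgB} would require first redeveloping the whole length-indexed construction ($M_n(X)$, canonical injections, the extension lemma) for a signature with two unary successors, which the paper never does. Your argument sidesteps this: you establish existence and uniqueness directly by structural recursion and induction on terms (resting on unique readability, which you correctly identify as the place where freeness of {\tt AlgB} is used), and then obtain the isomorphism by the standard ``initial objects are unique up to isomorphism'' argument --- invoking the universal property of $X$ to get a morphism $h$ back, and uniqueness of endomorphisms of {\tt AlgB} (resp.\ of $X$) to identify $h\circ f$ and $f\circ h$ with the identities. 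The two approaches share the same underlying idea (an inverse morphism exists and composites collapse to identities by uniqueness of extensions), but yours is self-contained and more elementary, whereas the paper's buys uniformity across signatures at the cost of machinery it only actually constructs for the magma case. Your closing caveats --- that freeness of $X$ must be read as $X$ having the same universal mapping property to avoid circularity, and that unique readability is exactly what makes the recursion well defined --- are precisely the points the paper leaves implicit.
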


These observations suggest that for defining
isomorphisms between {\tt AlgU, AlgB} 
and {\tt AlgT} that enable
a complete set of equivalent arithmetic (and later set-theoretic)
operations on each of them, we will need a mechanism to
prove such equivalences. To this end, it will be enough
to prove that such non-constructor operations 
also form free algebras of matching signatures.

We will call {\tt terms} the
elements of our initial algebras.

\section{Generalized Constructors} \label{gcons}

The iso-functors supporting the equivalence between actual constructors
and their recursively defined function counterparts suggest
exploring programming language constructs that treat them
in a similar way. For instance it makes sense to
extend ``constructor-only benefits''
like pattern matching to their function counterparts.

Fortunately, 
constructors/deconstructors generalized to arbitrary functions 
are available in Scala
through {\tt apply/unapply} methods and in Haskell
through a special notation implementing {\em views},
under the implicit assumption that they define
inverse operations. 

One can immediately
notice that our free algebras provide
sufficient conditions under which
this assumption is enforced.
This suggests the possibility that
such generalized constructor/deconstructor pairs
could provide the combined
benefits of pattern matching and
data abstraction, with the implication that direct
syntactic support for such constructs
can bring significant expressiveness to
functional programming languages.

\subsection{Generalized Constructors 
with {\tt apply/unapply} in Scala} \label{gscala}

Besides supporting {\tt case classes} and {\tt case objects} that are used 
(among other things) to implement pattern matching,
Scala's {\tt apply} and {\tt unapply} methods 
\cite{scalaTypeCases,odersky2008programming} allow definition of
customized constructors and destructors (called {\em extractors} in Scala). 

We will next describe how arithmetic operations
with our {\tt AlgT} terms,
represented as ordered rooted binary trees,
can benefit from the use such ``generalized constructors''.

\lstset{backgroundcolor=\color{lyellow}}

Our {\tt AlgT} free algebra will correspond
in Scala to a {\tt case object / case class} definition,
combined with a mechanism to share actual code,
encapsulated in the {\tt AlgT} trait.

\begin{lstlisting}{language=Scala}
case object T extends AlgT
case class C(l: AlgT, r: AlgT) extends AlgT

trait AlgT {
  def s(z: AlgT): AlgT = z match {
    case T       => C(T, T)
    case C(T, y) => d(s(y))
    case z       => C(T, h(z))
  }
  
  def p(z: AlgT): AlgT = z match {
    case C(T, T) => T
    case C(T, y) => d(y)
    case z       => C(T, p(h(z)))
  }
\end{lstlisting}
Note the predecessor function called {\tt p} and
our auxiliary functions named {\tt d} (which ``doubles''
its input, assumed different from {\tt T}) and {\tt h} 
(which ``halves'' its input, assumed ``even'' 
and different from {\tt T}).
\begin{lstlisting}{language=Scala}
  def d(z: AlgT): AlgT = z match {
    case C(x, y) => C(s(x), y)
  }

  def h(z: AlgT): AlgT = z match {
    case C(x, y) => C(p(x), y)
  }
 }
\end{lstlisting}

We will define our {\em generalized constructor/destructor} {\tt S}
representing the successor function and predecessor function
on rooted ordered binary trees of type {\tt AlgT} by providing
{\tt apply} and {\tt unapply} methods expressed in terms of our
``real'' constructors {\tt T} and {\tt C} and the actual
algorithms defined in the (shared) trait {\tt AlgT}.

\begin{lstlisting}{language=Scala}
object S extends AlgT {
  def apply(x: AlgT) = s(x)

  def unapply(x: AlgT) = x match {
    case C(_, _) => Some(p(x))
    case T       => None
  }
}

\end{lstlisting}
The definition of the generalized constructor/destructor {\tt D}
representing double / half is similar. Note the use of the method
{\tt d} defined in the trait {\tt AlgT}.
\begin{lstlisting}{language=Scala}
object D extends AlgT {
  def apply(x: AlgT) = d(x)

  def unapply(x: AlgT) = x match {
    case C(C(_, _), _) => Some(h(x))
    case _             => None
  }
}
\end{lstlisting}
The definition of the generalized constructor/destructor {\tt O}
can be seen as corresponding to $\lambda x.2x+1$ and its inverse.
\begin{lstlisting}{language=Scala}
object O extends AlgT {
  def apply(x: AlgT) = C(T, x)

  def unapply(x: AlgT) = x match {
    case C(T, b) => Some(b)
    case _       => None
  }
}

\end{lstlisting}
The definition of the generalized constructor/destructor {\tt O}
can be seen as corresponding to $\lambda x.2x+2$ and its inverse.
Note the use of the generalized constructors
{\tt S}, {\tt D} and {\tt O}, both on the left and right side of {\tt match}
statements, illustrating their usefulness
both as constructors and as extractors.
\begin{lstlisting}{language=Scala}
object I extends AlgT {
  def apply(x: AlgT) = S(O(x))

  def unapply(x: AlgT) = x match {
    case D(a) => Some(p(a))
    case _    => None
  }
}
\end{lstlisting}
\subsection{A Scala-based Natural Number Arithmetic Package using {\tt AlgT} Terms}
We will now illustrate how the use of generalized constructors
helps writing a fairly complete set of arithmetic algorithms
on therms of {\tt AlgT} seen as natural numbers. 
For comparison purposes, the reader might want
to look at the Haskell code in \cite{sac12} where similar
algorithms are expressed using a type class-based mechanism.
However, while the use of type classes comes with the
benefits of {\em data abstraction} it needs separate
functions for constructing, deconstructing and recognizing
terms to express the equivalent of the
generalized constructors used here.

We start with a comparison function returning {\tt LT, EQ, GT}
and supporting a {\tt total order} relation on {\tt AlgT}, isomorphic to
the one on $\N$. Note here the use of the generalized constructors
{\tt O} and {\tt I} providing a view of the terms of {\tt AlgT}
as terms of the free algebra {\tt BinT}.
\begin{lstlisting}{language=Scala}
trait Tcompute extends AlgT {
  def cmp(u: AlgT, v: AlgT): Int = (u, v) match {
    case (T, T)       => EQ
    case (T, _)       => LT
    case (_, T)       => GT
    case (O(x), O(y)) => cmp(x, y)
    case (I(x), I(y)) => cmp(x, y)
    case (O(x), I(y)) => strengthen(cmp(x, y), LT)
    case (I(x), O(y)) => strengthen(cmp(x, y), GT)
  }
  
  val LT = -1
  val EQ = 0
  val GT = 1

  private def strengthen(rel: Int, from: Int) = 
    rel match {
      case EQ => from
      case _  => rel
    }
\end{lstlisting}
Addition is expressed compactly in terms of the generalized
constructors {\tt O}, {\tt I} and {\tt S}.
\begin{lstlisting}{language=Scala}
  def add(u: AlgT, v: AlgT): AlgT = (u, v) match {
    case (T, y)       => y
    case (x, T)       => x
    case (O(x), O(y)) => I(add(x, y))
    case (O(x), I(y)) => O(S(add(x, y)))
    case (I(x), O(y)) => O(S(add(x, y)))
    case (I(x), I(y)) => I(S(add(x, y)))
  }
\end{lstlisting}
The definition of subtraction is similar, except that the
code of the predecessor function {\tt p}
is conveniently inherited directly from the
trait {\tt AlgT}, given that the trait {\tt Tcompute}
extends it.
\begin{lstlisting}{language=Scala}
  def sub(u: AlgT, v: AlgT): AlgT = (u, v) match {
    case (x, T)       => x
    case (O(x), O(y)) => p(O(sub(x, y)))
    case (O(x), I(y)) => p(p(O(sub(x, y))))
    case (I(x), O(y)) => O(sub(x, y))
    case (I(x), I(y)) => p(O(sub(x, y)))
  }

\end{lstlisting}
The multiplication operation is similar to the Haskell
code in section \ref{comp}, except for the use of the
generalized constructor {\tt O}.
\begin{lstlisting}{language=Scala}
   def multiply(u: AlgT, v: AlgT): AlgT = (u, v) match {
    case (T, _) => T
    case (_, T) => T
    case (C(hx, tx), C(hy, ty)) => {
      val v = add(tx, ty)
      val z = p(O(multiply(tx, ty)))
      C(add(hx, hy), add(v, z))
    }
  }
\end{lstlisting}
Similarly, a constant time complexity definition is given here
for the exponent of 2 operation, by using the ``real'' constructor {\tt C}.
\begin{lstlisting}{language=Scala}
  def exp2(x: AlgT) = C(x, T)
\end{lstlisting}
The power operation {\tt pow} takes advantage of
the generalized constructors {\tt O} and {\tt I} on the
left side of a {\tt case} statement through the {\tt AlgB}
view of {\tt AlgT}.
\begin{lstlisting}{language=Scala}
  def pow(u: AlgT, v: AlgT): AlgT = (u, v) match {
    case (_, T)    => C(T, T)
    case (x, O(y)) => multiply(x, pow(multiply(x, x), y))
    case (x, I(y)) => {
      val xx = multiply(x, x)
      multiply(xx, pow(xx, y))
    }
  }
\end{lstlisting}
Efficient division with remainder is a slightly more
complex algorithm, where we take advantage of
generalized constructors, direct inheritance from
trait {\tt AlgT} as well as number of
previously defined functions:
\begin{lstlisting}{language=Scala}
  def div_and_rem(x: AlgT, y: AlgT): (AlgT, AlgT) =
    if (cmp(x, y) == LT) (T, x)
    else if (T == y) null // division by zero
    else {
      def try_to_double(x:AlgT, y:AlgT, k:AlgT): AlgT =
        if (cmp(x, y) == LT) p(k)
        else try_to_double(x, D(y), S(k))

      def divstep(n: AlgT, m: AlgT): (AlgT, AlgT) = {
        val q = try_to_double(n, m, T)
        val p = multiply(exp2(q), m)
        (q, sub(n, p))
      }
      val (qt, rm) = divstep(x, y)
      val (z, r) = div_and_rem(rm, y)
      val dv = add(exp2(qt), z)
      (dv, r)
    }
\end{lstlisting}
Division and reminder can be separated using Scala's projection functions:
\begin{lstlisting}{language=Scala}
  def divide(x: AlgT, y: AlgT) = div_and_rem(x, y)._1

  def reminder(x: AlgT, y: AlgT) = div_and_rem(x, y)._2
\end{lstlisting}
Finally, the greatest common divisor {\tt gcd} and the least common multiplier
{\tt lcm} are defined as follows:
\begin{lstlisting}{language=Scala}
  def gcd(x: AlgT, y: AlgT): AlgT = 
    if (y == T) x else gcd(y, reminder(x, y))

  def lcm(x: AlgT, y: AlgT): AlgT = 
    multiply(divide(x, gcd(x, y)), y)
}
\end{lstlisting}
The trait {\tt Tconvert} 
implements efficiently
conversion to/from Scala's {\tt BigInt}
arbitrary size integers using bit-level operations
corresponding to power of 2 and recognition
of odd and even natural numbers.
The function {\tt fromN} builds an {\tt AlgT}
tree representation equivalent to a {\tt BigInt}.
\begin{lstlisting}{language=Scala}
 trait Tconvert {
  def fromN(i: BigInt): AlgT = {
      def oddN(i: BigInt) =
        i.testBit(0)

      def evenN(i: BigInt) =
        i != BigInt(0) && !i.testBit(0)

      def hN(x: BigInt): BigInt =
        if (oddN(x))
          BigInt(0)
        else
          BigInt(1) + hN(x >> 1)

      def tN(x: BigInt): BigInt =
        if (oddN(x))
          (x - BigInt(1)) >> 1
        else
          tN(x >> 1)

    if (0 == i) T
    else C(fromN(hN(i)), fromN(tN(i)))
  }
\end{lstlisting}
The function {\tt toN} converts an {\tt AlgT}
tree representation to a {\tt BigInt}.
\begin{lstlisting}{language=Scala}
  def toN(z: AlgT): BigInt = z match {
    case T => 0
    case C(x, y) =>
      (BigInt(1) << toN(x).intValue()) *
        (BigInt(2) * toN(y) + 1)
  }
}
\end{lstlisting}
Note that for both these conversions we have used,
for efficiency reasons, the ``real constructors''
{\tt T} and {\tt C}, although much simpler
(and slower) converters can be built using either
the {\tt AlgB} or {\tt AlgU} view of {\tt AlgT} terms.

The use of Scala's generalized constructors
inspired by
our free algebra isomorphisms has shown the
combined flexibility of inheritance as a mechanism
for data abstraction and convenient pattern matching
allowing the design of our algorithms
in a functional style. The implicit use of
{\tt apply} and {\tt unapply} methods in combination
with our simple free algebra semantics
has facilitated the safe use of fairly complex
(mutually) recursive functions in the definition of
the generalized constructors.
The use of Scala's traits has facilitated
flexible inheritance mechanisms supporting
shared definitions without any additional syntactic
clutter.

\section{An Application: Rational Arithmetic in Scala with Calkin-Wilf Trees}
\label{rats}

We will extend our Scala code snippet 
described in subsection \ref{gscala}
to a realistic arbitrary size arithmetic package. 
It is somewhat unconventional, as it
is based on the Calkin-Wilf bijection \cite{ratsCW,rationals}
between $\N$ and the set of positive rational numbers $\mathbb{Q^+}$, 
rather than more typical representations like the arrays of long words
used in Java's {\tt BigDecimal} package, also 
adopted through a wrapper
class with the same name 
by Scala (which runs on top of the Java Virtual Machine).

Among its advantages, division (with non-zero) always
returns a finitely represented rational and ``no bit is lost''
in the representation as canonical rational numbers with
co-prime numerator/denominator pairs are bijectively
mapped to natural numbers. Our approach
emphasizes the fact that a mathematical concept
defined traditionally through equivalence
classes and quotients, can be expressed
entirely in terms of a free algebra-based
mechanism.

The trait {\tt Q} representing our rational number
data type  contains distinct constructors for
positive ({\tt P}), negative numbers ({\tt M}) and
zero ({\tt Z}).
\lstset{backgroundcolor=\color{lyellow}}
\begin{lstlisting}{language=Scala}
trait Q extends Qcode

case object Z extends Q
case class P(x: (AlgT, AlgT)) extends Q
case class M(x: (AlgT, AlgT)) extends Q
\end{lstlisting}
The actual code will be shared through the
trait {\tt Qcode} that also mixes-in
functionality from the natural number
operations defined in the traits {\tt Tcompute}
and {\tt Tconvert}.

We start with a type definition for ordered
pairs of natural numbers {\tt PQ} represented
as terms of {\tt AlgT} and
the conversion function to a
conventional fraction represented as
an ordered pair of {\tt BigInt} objects.
The conversion function {\tt toFraq} uses
the {\tt AlgT} to {\tt BigInt} converter {\tt toN}.
\begin{lstlisting}{language=Scala}
trait Qcode extends Tcompute with Tconvert {
  type PQ = (AlgT, AlgT)

  def toFraq(): (BigInt, BigInt) = this match {
    case Z         => (0, 1)
    case M((a, b)) => (-(toN(a)), toN(b))
    case P((a, b)) => (toN(a), toN(b))
  }
\end{lstlisting}
The function {\tt t2pq} splits its argument {\tt u}
seen as a natural number into its corresponding Calkin-Wilf rational, 
represented as a pair of positive natural numbers of type {\tt PQ}.
Note the use of our generalized constructors {\tt O} and {\tt I}
distinguishing between odd and even numbers. The algorithm uses
an encoding of the path in the Calkin-Wilf tree as a member
of {\tt AlgB}, where {\tt O} is interpreted as a command to take
the left branch
and {\tt I} is interpreted as a command to take
the right branch at a node of the Calkin-Wilf tree (shown
in Fig. \ref{cwtree}, for a few small positive rationals, represented
as conventional fractions).
\FIG{cwtree}{The Calkin-Wilf Tree}{0.60}{cwtree.jpg}
\begin{lstlisting}{language=Scala}
  def t2pq(u: AlgT): PQ = u match {
    case T => (S(T), S(T))
    case O(n) => {
      val (x, y) = t2pq(n)
      (x, add(x, y))
    }
    case I(n) => {
      val (x, y) = t2pq(n)
      (add(x, y), y)
    }
  }
\end{lstlisting}
The function {\tt pq2t} fuses back into a ``natural number'' represented
as a term of {\tt AlgT}, corresponding to the path in the Calkin-Wilf tree,
a pair of co-prime natural numbers representing
the {\tt (numerator, denominator)} pair defining 
a positive rational number.
\begin{lstlisting}{language=Scala}
  def pq2t(uv: PQ): AlgT = uv match {
    case (O(T), O(T)) => T
    case (a, b) =>
      cmp(a, b) match {
        case GT => I(pq2t(sub(a, b), b))
        case LT => O(pq2t(a, sub(b, a)))
      }
  }
\end{lstlisting}
This brings us to the definition of the bijection between
{\em signed} rationals and terms seen through
the use of our generalized constructors {\tt O} and
{\tt I} as terms of {\tt AlgB} representing
natural numbers.
\begin{lstlisting}{language=Scala}
  def fromT(t: AlgT): Q = t match {
    case T    => Z // zero -> zero
    case O(x) => M(t2pq(x)) // odd -> negative
    case I(x) => P(t2pq(x)) // even -> positive
  }
\end{lstlisting}
Its inverse from signed rationals to terms of {\tt AlgT},
seen as natural numbers, proceeds by case analysis on the
{\tt Q} data type. Note that positive sign is encoded
by mapping to even naturals and negative sign is encoded
by mapping to odd naturals.
\begin{lstlisting}{language=Scala}
  def toT(q: Q): AlgT = q match {
    case Z    => T // zero -> zero
    case M(x) => O(pq2t(x)) // negative -> odd
    case P(x) => I(pq2t(x)) // positive -> even
  }
\end{lstlisting}
The bijection between Scala's {\tt BigInt}, seen
as a natural number type and signed rationals,
is defined as the pair of functions {\tt rat2nat}
and {\tt nat2rat}
\begin{lstlisting}{language=Scala}
  def nat2rat(n: BigInt): Q = fromT(fromN(n))
  
  def rat2nat(q: Q): BigInt = toN(toT(q))
\end{lstlisting}
Next we define a simplifier of
positive fractions represented as a pair,
to facilitate arithmetic operations
on our rationals.
\begin{lstlisting}{language=Scala}
   def pqsimpl(xy: PQ) = {
    val x = xy._1
    val y = xy._2
    val z = gcd(x, y)
    (divide(x, z), divide(y, z))
  }
\end{lstlisting}
We also use our simplifier to import and export non-canonically
represented rationals represented as {\tt BigInt} pairs.
\begin{lstlisting}{language=Scala}
  def fraq2pq(nd: (BigInt, BigInt)): PQ =
    pqsimpl((fromN(nd._1), fromN(nd._2)))

  def pq2fraq(nd: PQ): (BigInt, BigInt) =
    (toN(nd._1), toN(nd._2))
\end{lstlisting}
We are now ready for our arithmetic operations.
The template function {\tt pqop}, parameterized by a
function {\tt f}, will be shared
between addition and subtraction. Note that it
also involves simplification, to ensure
that the results are in a canonical
co-prime numerator/denominator form.
\begin{lstlisting}{language=Scala}
  def pqop(f: (AlgT, AlgT) => AlgT, xy:PQ, uv:PQ):PQ = {
    val (x, y) = xy
    val (u, v) = uv
    val z = gcd(y, v)
    val y1 = divide(y, z)
    val v1 = divide(v, z)
    val num = f(multiply(x, v1), multiply(u, y1))
    val den = multiply(z, multiply(y1, v1))
    pqsimpl((num, den))
  }
\end{lstlisting}
We can use it to define addition and subtraction
of positive rationals by simply instantiating our
function parameter {\tt f} to {\tt add} and {\tt sub}
operating on terms of {\tt AlgT}.
\begin{lstlisting}{language=Scala}
  def pqadd(a: PQ, b: PQ) = pqop(add, a, b)
  
  def pqsub(a: PQ, b: PQ) = pqop(sub, a, b)
\end{lstlisting}
The comparison operation providing a
total ordering of $\mathbb{Q^+}$ relies
on the function {\tt cmp} comparing terms of {\tt AlgT}
seen as natural numbers.
\begin{lstlisting}{language=Scala}
  def pqcmp(xy: PQ, uv: PQ) = {
    val (x, y) = xy
    val (u, v) = uv
    cmp(multiply(x, v), multiply(y, u))
  }
\end{lstlisting}
Multiplication, inverse and division on  
$\mathbb{Q^+}$ are defined as usual.
\begin{lstlisting}{language=Scala}  
  def pqmultiply(a: PQ, b: PQ) =
    pqsimpl(multiply(a._1, b._1), multiply(a._2, b._2))

  def pqinverse(a: PQ) = (a._2, a._1)

  def pqdivide(a: PQ, b: PQ) = pqmultiply(a, pqinverse(b))
\end{lstlisting}

We are ready to define 
arithmetic operations on the
set of signed rationals $\mathbb{Q}$,
by case analysis on
their sign. We start with the
opposite of a rational.
\begin{lstlisting}{language=Scala}
   def ropposite(x: Q) = x match {
    case Z    => Z
    case M(a) => P(a)
    case P(a) => M(a)

  }
\end{lstlisting}
Addition is defined by case analysis on
the sign and calls to the appropriate operations
on positive rationals.
\begin{lstlisting}{language=Scala}
  def radd(a: Q, b: Q): Q = (a, b) match {
    case (Z, y)       => y
    case (M(x), M(y)) => M(pqadd(x, y))
    case (P(x), P(y)) => P(pqadd(x, y))
    case (P(x), M(y)) => pqcmp(x, y) match {
      case LT => M(pqsub(y, x))
      case EQ => Z
      case GT => P(pqsub(x, y))
    }
    case (M(x), P(y)) => ropposite(radd(P(x), M(y)))
  }
\end{lstlisting}
Subtraction is defined similarly.
\begin{lstlisting}{language=Scala}
  def rsub(a: Q, b: Q) = radd(a, ropposite(b))

  def rmultiply(a: Q, b: Q): Q = (a, b) match {
    case (Z, _)       => Z
    case (_, Z)       => Z
    case (M(x), M(y)) => P(pqmultiply(x, y))
    case (M(x), P(y)) => M(pqmultiply(x, y))
    case (P(x), M(y)) => M(pqmultiply(x, y))
    case (P(x), P(y)) => P(pqmultiply(x, y))
  }
\end{lstlisting}
Finally we define the inverse on non-zero
rationals
\begin{lstlisting}{language=Scala}
  def rinverse(a: Q) = a match {
    case M(x) => M(pqinverse(x))
    case P(x) => P(pqinverse(x))
  }
\end{lstlisting}
and use it to derive from it a division operation on $\mathbb{Q}$
\begin{lstlisting}{language=Scala}
  def rdivide(a: Q, b: Q) =
    rmultiply(a, rinverse(b))
}
\end{lstlisting}
These operations conclude the trait {\tt Qcode}.
While this complete arithmetic package was built
mostly as a proof of concept for the expressiveness
of our free algebra based approach 
on progressively more interesting mathematical objects,
future work is planned for turning this package
into a practical tool. A first observation toward this end
is that, like in the case of Java's BigIntegers or the C-based
GMP package, one needs to use a hybrid approach, taking advantage of
actual machine words (64 bits at this point), to store
and operate on numbers that fit in a machine word.

\section{Related Work} \label{related}

Numeration systems on regular languages have been studied
recently, e.g. in \cite{Rigo2001469} and specific instances
of them are also known as 
bijective base-k numbers \cite{wiki:bijbase}.
Arithmetic packages similar to {\tt AlgU} and
{\tt AlgB} are part of libraries of proof assistants
like Coq \cite{Coq:manual} and the
corresponding regular 
languages have been used as a basis
of decidable arithmetic systems like
{\tt (W)S1S} \cite{buchi62} and {\tt (W)S2S} 
\cite{Rabin69Decidability}.

Arithmetic computations based
on more complex recursive data types like
the free magma of binary trees
(essentially isomorphic to the 
context-free language of balanced parentheses)
are described in \cite{sac12} and \cite{padl12},
where they are seen as G\"odel's {\tt System T} types,
as well as combinator application trees.
In \cite{ppdp10tarau}
a type class mechanism is used
to express computations on hereditarily
finite sets and hereditarily finite
functions.
However, none of these papers
provides proofs of the properties
of the underlying free algebras
or uses mechanisms similar to
the generalized constructors
described in this paper.

A very nice functional pearl \cite{rationals}
has explored in the past (using Haskell code)
algorithms related to the Calkin-Wilf
bijection \cite{ratsCW}. While using the same
underlying mathematics, our
Scala-based package works on terms of the
{\tt AlgT} free algebra
rather than conventional numbers,
and provides a complete package
of arbitrary size rational arithmetic operations
taking advantage of our
generalized constructors.

\section{Conclusion} \label{concl}

We have shown that free algebras corresponding to
some basic data types in programming languages
can be used for arithmetic computations
isomorphic to the usual
operations on $\N$.

As a new theoretical contribution, we have worked-out details of
proofs, based only on elementary mathematics,
of essential properties of the mutually recursive
successor and predecessor functions,
on the free algebra {\tt AlgT}
of ordered rooted binary trees.

A concept of {\em generalized constructor}, for which
we have found simple implementations
in {\tt Scala}, has been
introduced. By working in synergy with our
free algebra isomorphisms we have described,
using language constructs like Scala's {\tt apply / unapply},
simple and safe
means to combine data abstraction and pattern matching
in modern-day functional and object oriented languages.

As a new practical
contribution,
a complete arbitrary size signed rational number package written
in Scala has been derived working with terms of the {\tt AlgT}
free algebra of rooted ordered binary trees with empty leaves. 

Future work is planned to investigate
possible practical applications of our algorithms
to symbolic and/or arbitrary length integer arithmetic packages
and to parallel execution of arithmetic computations on {\tt AlgT}.

The code snippet showing the use of Scala's {\tt apply}
and {\tt unapply} methods to support generalized constructors
as well as the arithmetic on rationals
is available as a separate file
at http://\verb~######~.

\section*{Acknowledgement}
This research has been supported by NSF grant \verb~######~. 

\bibliographystyle{plainnat}
\bibliography{INCLUDES/theory,tarau,INCLUDES/proglang,INCLUDES/biblio,INCLUDES/syn}


\end{document}